\documentclass[10pt]{IEEEtran}
\IEEEoverridecommandlockouts

\usepackage[numbers,sort&compress]{natbib}
\def\BibTeX{{\rm B\kern-.05em{\sc i\kern-.025em b}\kern-.08em T\kern-.1667em\lower.7ex\hbox{E}\kern-.125emX}}

\usepackage{amsmath,amssymb,amsfonts}

\usepackage{graphicx}
\usepackage{booktabs}
\usepackage{multirow}
\usepackage{adjustbox}
\usepackage{stfloats}
\usepackage{enumitem}

\usepackage{algorithm}
\usepackage{algpseudocode}

\usepackage{hyperref}
\usepackage{url}

\usepackage{textcomp}
\usepackage{nicefrac}
\usepackage{microtype}
\usepackage{xcolor}

\usepackage{setspace}
\usepackage{tikz}

\usepackage{orcidlink}
\usepackage{comment}
\usepackage{braket}
\usepackage[skip=0pt, font=footnotesize]{caption}
\usepackage{multicol}
\usepackage{multirow}
\usepackage[most]{tcolorbox}
\usepackage{amsthm}
\usepackage{stfloats}

\newtcbtheorem[number within=section]{securitytheorem}{\footnotesize Theorem}%
{
    enhanced,
    breakable,
    colback=blue!3,
    colframe=blue!45!black,
    coltitle=black,
    fonttitle=\bfseries,
    boxed title style={
        colback=blue!12,
        colframe=blue!45!black
    },
    attach boxed title to top left={xshift=2mm,yshift=-2mm},
    top=2mm,
    bottom=2mm,
    left=2mm,
    right=2mm,
    arc=1mm,
    boxrule=0.5pt
}{thm}

\newtcbtheorem[number within=section]{securitycorollary}{\footnotesize Corollary}%
{
    enhanced,
    breakable,
    colback=gray!4,
    colframe=gray!60!black,
    coltitle=black,
    fonttitle=\bfseries,
    boxed title style={
        colback=gray!15,
        colframe=gray!60!black
    },
    attach boxed title to top left={xshift=2mm,yshift=-2mm},
    top=2mm,
    bottom=2mm,
    left=2mm,
    right=2mm,
    arc=1mm,
    boxrule=0.5pt
}{cor}
\usepackage{pifont}
\newcommand{\cmark}{\ding{51}} \newcommand{\xmark}{\ding{55}} \newcommand{\pmark}{$\triangle$}
\begin{document}

\bstctlcite{IEEEexample:BSTcontrol} 

\title{
QIAP: A Quantum-Inspired Authentication Protocol with BB84-Based Channel Monitoring
}

\author{\IEEEauthorblockN{Owais Ishtiaq Siddiqui\textsuperscript{1}, Nouhaila Innan\textsuperscript{2,3}, Hira Zaheer\textsuperscript{1}, Muhammad Shafique\textsuperscript{2,3}
}\\ 
\IEEEauthorblockA{
\textsuperscript{1}Department of Computer Science, Scuola IMT Alti Studi Lucca, Italy\\
\textsuperscript{2}eBRAIN Lab, Division of Engineering, New York University Abu Dhabi (NYUAD), Abu Dhabi, UAE\\ \textsuperscript{3}Center for Quantum and Topological Systems (CQTS), NYUAD Research Institute, NYUAD, Abu Dhabi, UAE\\
\{owaisishtiaq.siddiqui, hira.zaheer \}@imtlucca.it, \{nouhaila.innan, muhammad.shafique\}@nyu.edu}}

\maketitle
\thispagestyle{empty}
\pagestyle{empty}
\begin{abstract}
Authentication in quantum-aware communication settings requires mechanisms that verify user legitimacy while also assessing whether the communication channel exhibits excessive disturbance. This paper presents QIAP, a stateful hybrid framework that combines classical commitment-based challenge-response authentication with BB84-based quantum-channel monitoring. QIAP uses a dual-branch verification structure inspired by rectilinear and diagonal basis selection. In the rectilinear branch, the prover opens a hash-based commitment and the verifier checks consistency with the registered round token. In the diagonal branch, the prover returns an HMAC tag bound to the verifier nonce, round identifier, and submitted commitment without disclosing the round-specific sub-secret. Under the stated hash, HMAC, freshness, secure-registration, protected-storage, and high-entropy-secret assumptions, QIAP provides computational resistance to the modeled wrong-secret and replay attempts and protects the long-term master secret from direct recovery. After successful authentication, the BB84 module estimates the quantum bit error rate (QBER) and aborts the session when the measured value reaches or exceeds the configured threshold. In the reported simulation, no successful authentication was observed among 1,000 tested wrong-secret attempts per branch. The baseline and full intercept-resend conditions produced mean QBER values of approximately 0.02 and 0.25, respectively. The classical authentication stage required approximately 0.6 ms, and the implemented authentication messages required approximately 113 bytes per round. These results indicate that QIAP can provide a lightweight hybrid framework for combining secret-based identity verification with QBER-based channel-disturbance assessment under the stated threat model.
\end{abstract}

\begin{IEEEkeywords}
Quantum-Inspired Authentication Protocol, quantum key distribution, BB84, QBER monitoring, challenge-response authentication, replay resistance
\end{IEEEkeywords}

\section{Introduction}

Authentication is essential in secure communication systems, distributed networks, and privacy-preserving services. A user or device must prove possession of valid authentication material without exposing its long-term secret. Classical authentication can verify identity, but it does not determine whether a communication channel has been intercepted or disturbed. Reused transcripts, weak credentials, and poor session binding can also enable impersonation, replay, and man-in-the-middle attacks.

Zero-knowledge proofs (ZKPs) support privacy-preserving verification by allowing a prover to demonstrate that a statement is valid without revealing its private witness \cite{xing2025zero,goldwasser1989knowledge}. ZKP protocols are commonly defined through completeness, soundness, and zero-knowledge properties \cite{peng2026survey}, and have been applied to digital identity, blockchain systems, password authentication, and privacy-preserving computation \cite{chi2023privacy,kuznetsov2024enhanced,kara2023password,xing2023zero,xing2025zero}. However, classical authentication and ZKP-based methods verify the prover rather than the physical condition of the communication channel.

Quantum communication provides a separate security function. In BB84 quantum key distribution, an adversary who measures transmitted quantum states may disturb them and increase the quantum bit error rate (QBER) \cite{lee2022eavesdropping,imran2024quantum,BB84}. BB84 can therefore detect channel disturbance, while classical authentication determines whether a party possesses valid authentication material. These two functions address different security goals.

Recent quantum authentication methods use quantum measurements or QBER behavior during identity verification \cite{garcia2024experimental}. Such methods may require quantum hardware for authentication. Classical challenge-response methods are easier to deploy, but they cannot directly measure quantum-channel disturbance. This motivates a hybrid design that performs identity authentication using classical cryptographic operations and validates the quantum channel before accepting the session.

This paper proposes the Quantum-Inspired Authentication Protocol (QIAP), a stateful framework combining classical commitment-based challenge-response authentication with BB84-based channel validation. The term ``quantum-inspired'' refers to the use of rectilinear and diagonal authentication branches modeled after the two BB84 bases. The authentication layer uses classical hash and HMAC operations and is not presented as a formal zero-knowledge proof.

For each authentication round, the prover derives a single-use sub-secret from a high-entropy master secret and creates a nonce-bound commitment using fresh randomness. After receiving the commitment, the verifier selects one of two branches. In the rectilinear branch, the prover opens the commitment and the verifier checks the registered token and commitment consistency. In the diagonal branch, the prover returns an HMAC tag bound to the nonce, round identifier, and commitment without revealing the sub-secret. After successful authentication, the BB84 stage estimates the QBER and aborts the session when the selected threshold is reached or exceeded.

QIAP separates identity acceptance from channel acceptance. Each branch applies hash, commitment-opening, or HMAC verification, while multiple rounds provide independent single-use authentication records and prevent reuse of previously challenged material.
\textbf{The main contributions of this work are as follows:}
\begin{itemize}
    \item We propose QIAP, a dual-branch classical authentication framework combined with BB84-based channel validation.

    \item We design nonce-bound authentication rounds using hash commitments, HMAC binding tags, fresh verifier nonces, and single-use verification records.

    \item We provide a security analysis covering completeness, computational impersonation resistance, replay resistance, master-secret privacy, and QBER-based session abort.

    \item We observe no successful wrong-secret authentication in 1,000 attempts per branch, corresponding to an observed rejection rate of \(100\%\) under the tested implementation.

    \item We obtain mean QBER values of approximately \(0.02\) for the clean channel and \(0.25\) for the intercept--resend condition using \(1024\) simulated qubits and a threshold of \(\theta=0.11\).

    \item We measure approximately \(0.6\,\mathrm{ms}\) for the classical authentication stage and approximately \(113\) bytes of communication per authentication round.
\end{itemize}

\section{Background and Related Work}

This section summarizes the concepts underlying QIAP and positions the proposed design relative to classical privacy-preserving authentication and quantum-channel validation.

\subsection{Classical Zero-Knowledge Proofs}

A zero-knowledge proof (ZKP) is an interactive or non-interactive protocol in which a prover convinces a verifier that a statement is true without revealing the private witness supporting that statement \cite{goldwasser1989knowledge}. In a typical interactive protocol, the prover sends an initial commitment, the verifier issues a challenge, and the prover returns a response, as illustrated in Fig.~\ref{fig:classical_zkp}.

\begin{figure}[ht]
    \centering
    \includegraphics[width=0.48\textwidth]{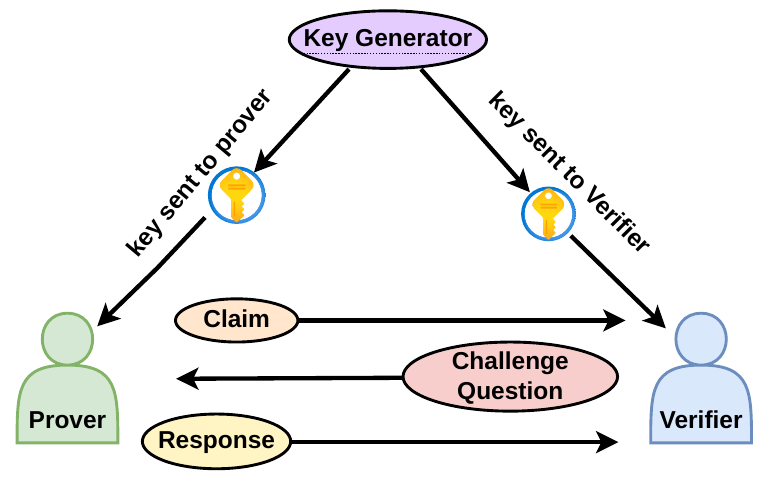}
    \caption{General interaction in a classical zero-knowledge proof. The prover uses a private witness to answer a verifier challenge without revealing the witness itself.}
    \label{fig:classical_zkp}
\end{figure}

Let \(L\) be a language in NP, \(x\) a public statement, and \(w\) a witness satisfying \(R(x,w)=1\). A proof system \((P,V)\) is commonly characterized by three properties. \emph{Completeness} requires an honest prover with a valid witness to be accepted with overwhelming probability. \emph{Soundness} requires a prover without a valid witness to be accepted only with negligible probability. \emph{Zero knowledge} requires the verifier's view to be computationally indistinguishable from the output of a simulator that does not know \(w\) \cite{9520375}.

These definitions provide useful principles for privacy-preserving authentication. However, QIAP does not claim the formal zero-knowledge property because its rectilinear branch reveals a round-specific sub-secret. QIAP instead aims to protect the long-term master secret while verifying possession of secret-derived authentication material.

\subsection{BB84-Based Channel Validation}

BB84 is a foundational quantum key distribution (QKD) protocol based on state preparation and measurement in rectilinear and diagonal bases \cite{BB84}. If an adversary measures a transmitted state using an incompatible basis, the measurement may disturb the state and create errors in the sifted key.

The disturbance is estimated using the quantum bit error rate (QBER):
\begin{equation}
    \mathrm{QBER}
    =
    \frac{N_{\mathrm{mismatch}}}
         {N_{\mathrm{sifted}}},
    \label{eq:qber}
\end{equation}
where \(N_{\mathrm{mismatch}}\) is the number of mismatched sifted bits and \(N_{\mathrm{sifted}}\) is the number of sifted bits selected for comparison.

In the simulation used in this work, the channel is accepted when
\begin{equation}
    \mathrm{QBER}<\theta,
    \qquad
    \theta=0.11,
    \label{eq:qber-threshold}
\end{equation}
and the session is aborted otherwise. The value \(\theta=0.11\) is used as a simulation decision threshold rather than a universal deployment limit. A deployed QKD system must also account for finite-key effects, device imperfections, error-correction leakage, privacy amplification, losses, and its selected security model.

BB84 and authentication provide different security functions. Authentication verifies possession of valid identity-related material, whereas QBER monitoring evaluates whether the quantum channel exhibits excessive disturbance. QKD also requires an authenticated classical channel during basis reconciliation and post-processing. QIAP therefore uses the classical authentication result to authorize and bind the subsequent BB84 session.

\subsection{Research Gap and Related Approaches}

Existing methods usually emphasize one of three directions. Classical ZKP-based authentication protects identity-related witnesses but does not directly measure physical disturbance on a quantum channel \cite{szczegielniak2025zero,chriki2026enhancing}. Quantum authentication protocols use quantum states to verify users or devices, but generally require quantum communication resources during authentication \cite{garcia2024experimental,lou2026novel}. Broader quantum-cryptography frameworks support channel-security and other quantum functions, but are not necessarily designed as lightweight classical identity-authentication mechanisms \cite{bozzio2025quantum,adil2025quantum}.

QIAP addresses this design gap by combining a classical, stateful challenge-response layer with a separate BB84 validation stage. Table~\ref{tab:related} compares the design categories using properties that can be stated without treating one scheme's internal verification rule as a universal benchmark.

\begin{table*}[t] \centering \caption{Qualitative comparison of QIAP with recent authentication and quantum-security approaches.} 
\label{tab:related} 
\setlength{\tabcolsep}{2pt} 
\footnotesize 
\begin{tabular}{@{}p{0.8cm}p{7cm}cccc@{}} \toprule \textbf{Work} & \textbf{Approach} & \textbf{Classical Auth.} & \textbf{Channel Detection} & \textbf{Secret Privacy} & \textbf{Dual Verification} \\ 
\midrule 
\cite{lou2026novel}& Quantum mutual authentication and key agreement & \pmark & \cmark & \cmark & \xmark 
\\ \cite{szczegielniak2025zero}& Zero-knowledge authentication for 5G and beyond & \cmark & \xmark & \cmark & \xmark \\ 
\cite{chaudhry2025zk} & Quantum-resistant authentication using ZK proofs & \cmark & \xmark & \cmark & \xmark 
\\ \cite{bozzio2025quantum} & Quantum cryptography beyond QKD & \xmark & \cmark & \cmark & \xmark \\ \cite{garcia2024experimental} & Experimental quantum ZKP user authentication & \xmark & \cmark & \cmark & \xmark \\ 
\midrule \textbf{QIAP} & \textbf{Dual-branch authentication with BB84 QBER monitoring} & \textbf{\cmark} & \textbf{\cmark} & \textbf{\cmark} & \textbf{\cmark} 
\\ 
\bottomrule 
\multicolumn{6}{l}{\scriptsize \cmark: supported;\quad \xmark: not supported;\quad \pmark: partially supported.} 
\end{tabular} 
\end{table*}
QIAP performs identity authentication using classical cryptographic operations and invokes quantum hardware only for the BB84 stage. Its distinguishing feature is the explicit separation of identity acceptance and channel acceptance within one session flow, with both stages required for session acceptance.

\section{Methodology}

QIAP combines a classical commitment-based authentication layer with a BB84-based channel-validation layer. The authentication layer verifies possession of valid secret-derived material, while the BB84 layer evaluates whether the quantum channel exhibits excessive disturbance before a communication session is accepted. QIAP consists of three phases: registration, basis-dependent authentication, and BB84-based channel validation. Fig.~\ref{fig:qiap_architecture} illustrates the protocol flow.

\begin{figure}[ht]
    \centering
    \includegraphics[width=\linewidth]{Qzkp.pdf}
    \caption{QIAP architecture with registration, basis-dependent authentication, and BB84-based channel validation before session acceptance.}
    \label{fig:qiap_architecture}
\end{figure}

The notation used throughout the protocol is summarized as follows. The prover's high-entropy master secret is denoted by \(s\), the authentication-round index by \(r\), the round-derived sub-secret by \(s_r\), fresh prover randomness by \(\rho_r\), the fresh verifier nonce by \(\nu_r\), the commitment by \(C_r\), the rectilinear verification token by \(T_r\), the diagonal verification key by \(K_r^{(d)}\), the branch challenge by \(b_r\), and the diagonal-branch HMAC tag by \(\tau_r\). All concatenated fields are encoded using an unambiguous, length-delimited binary encoding.

\subsection{Registration Phase}

Let \(R\) be the number of registered authentication rounds. For each
\(r\in\{0,\ldots,R-1\}\), the prover derives the round-specific sub-secret
\begin{equation}
    s_r
    =
    H\!\left(
        s \parallel \operatorname{enc}(r)
    \right),
    \label{eq:round-secret}
\end{equation}
where \(H(\cdot)\) is a cryptographic hash function and
\(\operatorname{enc}(r)\) is an unambiguous binary encoding of the round index.

Two branch-specific verification values are then derived from \(s_r\). The rectilinear verification token is
\begin{equation}
    T_r
    =
    H\!\left(
        s_r \parallel \mathtt{verify}
    \right),
    \label{eq:rect-token}
\end{equation}
and the diagonal verification key is
\begin{equation}
    K_r^{(d)}
    =
    \operatorname{HMAC}_{s_r}
    \!\left(
        \mathtt{key\_diag}
    \right).
    \label{eq:diag-key}
\end{equation}

The verifier stores the single-use round record
\begin{equation}
    \mathcal{L}_r
    =
    \left(
        T_r,
        K_r^{(d)},
        \mathsf{unused}
    \right),
    \label{eq:round-record}
\end{equation}
rather than storing the master secret \(s\). Each record is marked as consumed after one authentication attempt, regardless of whether the attempt succeeds or fails. This rule prevents reuse of disclosed or previously challenged round material. The verifier storage is assumed to be protected because knowledge of \(K_r^{(d)}\) permits generation of valid diagonal-branch tags for that round. New round records must be provisioned through the authenticated registration process before the available pool is exhausted.

\subsection{Commitment and Challenge Phase}

For an unused round \(r\), the verifier samples a fresh nonce \(\nu_r\) and sends \((r,\nu_r)\) to the prover. The prover recomputes \(s_r\), samples fresh randomness \(\rho_r\), and constructs the nonce-bound commitment
\begin{equation}
    C_r
    =
    H\!\left(
        \rho_r
        \parallel s_r
        \parallel \nu_r
        \parallel \operatorname{enc}(r)
    \right).
    \label{eq:commitment}
\end{equation}

After receiving \(C_r\), the verifier samples the branch challenge
\begin{equation}
    b_r
    \xleftarrow{\$}
    \left\{
        \mathsf{Rect},
        \mathsf{Diag}
    \right\},
    \label{eq:branch-challenge}
\end{equation}
and sends \(b_r\) to the prover. Sampling the branch after receiving the commitment prevents the prover from constructing \(C_r\) for only one selected response type.

\subsection{Rectilinear Verification Branch}

If \(b_r=\mathsf{Rect}\), the prover opens the commitment by sending
\begin{equation}
    \operatorname{Resp}^{(\mathsf{Rect})}_r
    =
    \left(
        s_r,\rho_r
    \right).
    \label{eq:rect-response}
\end{equation}

The verifier accepts the round only if
\begin{equation}
\begin{aligned}
    H\!\left(
        s_r \parallel \mathtt{verify}
    \right)
    &=
    T_r,
    \\
    H\!\left(
        \rho_r
        \parallel s_r
        \parallel \nu_r
        \parallel \operatorname{enc}(r)
    \right)
    &=
    C_r.
\end{aligned}
\label{eq:rect-checks}
\end{equation}

The first condition checks consistency with the registered round token. The second condition checks that the response opens the commitment for the current nonce and round index. This branch reveals \(s_r\), but not the master secret \(s\).

\subsection{Diagonal Verification Branch}

If \(b_r=\mathsf{Diag}\), the prover does not reveal \(s_r\). Instead, it computes
\begin{equation}
    \tau_r
    =
    \operatorname{HMAC}_{K_r^{(d)}}
    \!\left(
        \nu_r
        \parallel \operatorname{enc}(r)
        \parallel C_r
    \right),
    \label{eq:diag-tag}
\end{equation}
and sends
\begin{equation}
    \operatorname{Resp}^{(\mathsf{Diag})}_r
    =
    \tau_r.
    \label{eq:diag-response}
\end{equation}

The verifier recomputes the expected tag using the stored
\(K_r^{(d)}\), the current nonce \(\nu_r\), the round index \(r\), and the received commitment \(C_r\). The round is accepted only if the received and recomputed tags are equal. This branch verifies freshness and commitment binding without disclosing \(s_r\).

After either branch is evaluated, the verifier marks \(\mathcal{L}_r\) as consumed. A consumed record cannot be used in a later authentication attempt. Because failed attempts also consume records, implementations should apply rate limiting and authenticated session initiation to reduce record-exhaustion attacks.

\subsection{BB84 Channel Validation}

The BB84 phase is executed only after all required authentication rounds are accepted. Using \(N_q\) transmitted qubits, the module estimates the QBER according to Eq. \eqref{eq:qber}. The channel decision is
\begin{equation}
    \mathsf{SessionOK}
    =
    \begin{cases}
        1, & \mathrm{QBER}<\theta,\\
        0, & \mathrm{QBER}\geq\theta,
    \end{cases}
    \label{eq:session-decision}
\end{equation}
where \(\theta\) is the selected QBER threshold.

If \(\mathsf{SessionOK}=1\), BB84 post-processing proceeds with sifting, error correction, and privacy amplification to derive the session key \(K_{ij}\). If \(\mathsf{SessionOK}=0\), the protocol aborts and accepts no session key. The classical BB84 messages are assumed to be authenticated and bound to the completed QIAP authentication exchange so that an adversary cannot substitute or splice a different quantum session.

\section{Mathematical Security Analysis}

This section analyzes QIAP under the stated threat model, with the protocol flow summarized in Algorithm~\ref{alg:qiap}. The analysis covers honest execution, impersonation resistance, replay resistance, protection of the master secret, and QBER-based session abort. The claims are computational and depend on the assumptions listed below. 
\subsection{Security Assumptions}

QIAP relies on the following assumptions:
\begin{itemize}
    \item \textbf{Hash security:} The hash function \(H\) is preimage resistant and second-preimage resistant.

    \item \textbf{HMAC security:} HMAC is existentially unforgeable under chosen-message attack for parties that do not know the HMAC key.

    \item \textbf{High-entropy master secret:} The master secret \(s\) is sampled from a sufficiently large key space and is not derived directly from a human-memorable password.

    \item \textbf{Freshness:} Each verifier nonce \(\nu_r\) and prover randomness value \(\rho_r\) is sampled independently and is not reused.

    \item \textbf{Unambiguous encoding:} All concatenated protocol fields use fixed-length or length-delimited binary encodings.

    \item \textbf{Single-use records:} Each verifier record \(\mathcal{L}_r\) is consumed after one authentication attempt.

    \item \textbf{Protected verifier storage:} The stored values \(T_r\) and \(K_r^{(d)}\) are protected from disclosure and unauthorized modification.

    \item \textbf{Secure registration:} Registration is authenticated and integrity protected.

    \item \textbf{Authenticated BB84 classical channel:} Classical messages used for basis reconciliation and post-processing are authenticated and bound to the completed QIAP authentication exchange.
\end{itemize}

\begin{algorithm}[t]
\footnotesize
\caption{QIAP protocol flow}
\label{alg:qiap}
\begin{algorithmic}[1]
\Require Master secret \(s\), registered rounds \(R\), BB84 length \(N_q\), threshold \(\theta\)
\Ensure Authentication decision \(\mathsf{AuthOK}\) and session decision \(\mathsf{SessionOK}\)

\State \textbf{Registration}
\For{\(r=0,\ldots,R-1\)}
    \State Derive \(s_r\), \(T_r\), and \(K_r^{(d)}\)
    \State Store \(\mathcal{L}_r=(T_r,K_r^{(d)},\mathsf{unused})\)
\EndFor

\State \textbf{Authentication}
\State \(\mathsf{AuthOK}\gets 0\)
\For{\(r=0,\ldots,R-1\)}
    \If{\(\mathcal{L}_r\) is consumed}
        \State \Return \((0,0)\)
    \EndIf

    \State Verifier samples fresh \(\nu_r\) and sends \((r,\nu_r)\)
    \State Prover samples fresh \(\rho_r\), computes \(C_r\), and sends \(C_r\)
    \State Verifier samples \(b_r\in\{\mathsf{Rect},\mathsf{Diag}\}\) and sends \(b_r\)

    \If{\(b_r=\mathsf{Rect}\)}
        \State Prover sends \((s_r,\rho_r)\)
        \State Verify both conditions in Eq. \eqref{eq:rect-checks}
    \Else
        \State Prover computes and sends \(\tau_r\)
        \State Verify \(\tau_r\) using Eq. \eqref{eq:diag-tag}
    \EndIf

    \State Mark \(\mathcal{L}_r\) as consumed
    \If{round verification fails}
        \State \Return \((0,0)\)
    \EndIf
\EndFor

\State \(\mathsf{AuthOK}\gets 1\)

\State \textbf{BB84 channel validation}
\State Run BB84 using \(N_q\) qubits and estimate QBER
\If{\(\mathrm{QBER}<\theta\)}
    \State Perform BB84 post-processing and derive \(K_{ij}\)
    \State \(\mathsf{SessionOK}\gets 1\)
\Else
    \State Abort and set \(\mathsf{SessionOK}\gets 0\)
\EndIf

\State \Return \((\mathsf{AuthOK},\mathsf{SessionOK})\)
\end{algorithmic}
\end{algorithm}
\subsection{Authentication Completeness}

\begin{securitytheorem}{\footnotesize  Authentication Completeness}{auth-completeness}
If the prover and verifier follow QIAP honestly and use the same registered record \(\mathcal{L}_r\), then the authentication round is accepted in either verification branch.
\end{securitytheorem}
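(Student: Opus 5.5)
The plan is a direct verification argument. It rests only on the fact that $H$ and $\operatorname{HMAC}$ are deterministic functions. Together with the unambiguous-encoding assumption, this means that recomputing either function on identical inputs yields identical outputs. I would first fix the honest execution of round $r$. The prover holds $s$ and recomputes
\[
s_r = H\!\left(s\parallel\operatorname{enc}(r)\right)
\]
exactly as in Eq.~\eqref{eq:round-secret}. The registered record $\mathcal{L}_r=(T_r,K_r^{(d)},\mathsf{unused})$ was computed at registration from this same $s_r$ via Eqs.~\eqref{eq:rect-token} and~\eqref{eq:diag-key}. Secure registration and protected storage guarantee that these stored values are unmodified. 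The record is unused, so the consumed-record check in Algorithm~\ref{alg:qiap} does not trigger an early return. The prover then forms $C_r$ by Eq.~\eqref{eq:commitment} from fresh $\rho_r$ and the received $(r,\nu_r)$, and the verifier receives this same $C_r$ unaltered, since the execution is honest.

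I would then split on the challenge $b_r$. In the case $b_r=\mathsf{Rect}$, the prover sends $(s_r,\rho_r)$, and I check the two conditions of Eq.~\eqref{eq:rect-checks} separately.
\begin{itemize}
\item The first condition holds because $H(s_r\parallel\mathtt{verify})$ is the very expression that defined $T_r$ in Eq.~\eqref{eq:rect-token}.
\item The second condition holds because the verifier evaluates $H(\rho_r\parallel s_r\parallel\nu_r\parallel\operatorname{enc}(r))$ on exactly the inputs the prover used in Eq.~\eqref{eq:commitment}. The verifier uses its own stored $\nu_r$, which equals the value it sent.
\end{itemize}

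In the case $b_r=\mathsf{Diag}$, the prover computes $K_r^{(d)}=\operatorname{HMAC}_{s_r}(\mathtt{key\_diag})$ from its recomputed $s_r$, and this equals the stored key by Eq.~\eqref{eq:diag-key}. It then computes $\tau_r$ per Eq.~\eqref{eq:diag-tag}. The verifier recomputes $\operatorname{HMAC}_{K_r^{(d)}}(\nu_r\parallel\operatorname{enc}(r)\parallel C_r)$ with the same key, nonce, index, and received commitment. The two tags are therefore equal, and the round is accepted.

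The only step needing care is making sure the inputs genuinely coincide on both sides. There are three points to confirm.
\begin{itemize}
\item The nonce and round index the prover used match those the verifier stores for the current attempt.
\item The commitment is delivered unchanged.
\item The encoding of concatenations is canonical, so that identical field values produce identical byte strings.
\end{itemize}
I would state these as consequences of honest execution, the freshness and single-use bookkeeping for the current attempt, and the unambiguous-encoding assumption. I would also note that correctness holds with probability $1$, not merely with overwhelming probability. No cryptographic hardness assumption is needed for completeness; those assumptions matter only for the later soundness-type claims.
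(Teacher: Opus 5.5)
Your proposal is correct and follows essentially the same route as the paper: you split on $b_r$, and each verification check holds because the verifier deterministically recomputes $H$ or $\operatorname{HMAC}$ on exactly the inputs the honest prover used. The points you spell out (the prover's recomputed $K_r^{(d)}$ equaling the stored key, canonical encoding, the commitment arriving unchanged, and acceptance with probability $1$) are details the paper's shorter proof leaves implicit.
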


\begin{proof}
In the rectilinear branch, the honest prover derives the registered sub-secret \(s_r\) using Eq. \eqref{eq:round-secret} and opens \(C_r\) using the same values \(s_r\), \(\rho_r\), \(\nu_r\), and \(r\) used in Eq. \eqref{eq:commitment}. Therefore, both checks in Eq. \eqref{eq:rect-checks} hold.

In the diagonal branch, the prover and verifier use the same value \(K_r^{(d)}\), nonce \(\nu_r\), round index \(r\), and commitment \(C_r\). They therefore compute the same tag according to Eq. \eqref{eq:diag-tag}. Hence, an honestly executed round is accepted in either branch.
\end{proof}

\subsection{Impersonation Resistance}

\begin{securitytheorem}{\footnotesize Rectilinear-Branch Impersonation Resistance}{rect-soundness}
Assume the adversary does not know the registered round sub-secret \(s_r\). The probability that the adversary is accepted in the rectilinear branch is negligible under the stated hash assumptions.
\end{securitytheorem}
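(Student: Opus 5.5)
Acceptance in the rectilinear branch requires the adversary to output a pair \((s^\ast,\rho^\ast)\) satisfying both conditions in Eq.~\eqref{eq:rect-checks}. The argument is a reduction on the first condition only. Any accepted response must contain some \(s^\ast\) with \(H(s^\ast\parallel\mathtt{verify})=T_r\). I split into two cases: \(s^\ast=s_r\), and \(s^\ast\neq s_r\). In each case I build a reduction that turns a successful adversary into an attack on a stated property of \(H\).

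\emph{Case \(s^\ast\neq s_r\).} The pair \((s_r\parallel\mathtt{verify},\; s^\ast\parallel\mathtt{verify})\) is then a collision on \(H\). This relies on unambiguous encoding, so that distinct \(s^\ast\) give distinct inputs.
\begin{itemize}
\item To phrase it as a \emph{second-preimage} attack, the reduction is handed a challenge input \(x=s_r\parallel\mathtt{verify}\) for a random high-entropy \(s_r\). It can simulate the registration record \(T_r=H(x)\) itself.
\item Since the adversary's view (\(T_r\), the nonce, and the commitment \(C_r\)) is computed from \(s_r\), the reduction must also produce \(C_r\). It can, because it knows \(s_r\) and samples \(\rho_r\) itself.
\item An output \(s^\ast\neq s_r\) therefore breaks second-preimage resistance.
\end{itemize}

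\emph{Case \(s^\ast=s_r\).} Here the adversary has recovered \(s_r\) while, by hypothesis, not knowing it a priori. I reduce this to preimage resistance. The reduction receives a target \(y=H(x)\) with \(x\) of the form \(s_r\parallel\mathtt{verify}\), and sets \(T_r=y\). Outputting \(s_r\) inverts \(y\).

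\emph{Main obstacle.} This case is the hard step. The reduction cannot form \(C_r=H(\rho_r\parallel s_r\parallel\nu_r\parallel\operatorname{enc}(r))\) without \(s_r\). My first attempt is to replace \(C_r\) with a uniformly random string. I would justify the swap by appealing to the fresh high-entropy randomness \(\rho_r\) together with the hash being modeled as hiding (random-oracle-like). I would flag explicitly that plain preimage resistance does not formally give hiding, so the proof should either invoke a random-oracle heuristic or add a hiding assumption for the commitment. A fallback is to note that the adversary sends \(C_r\) before \(b_r\) in an impersonation attack. In a pure impersonation attempt the adversary itself produces \(C_r\), so no honest commitment under \(s_r\) appears in its view. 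The reduction then only needs to supply \(T_r\), and preimage resistance applies directly.

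\emph{Conclusion.} I would conclude with a union bound:
\[
\Pr[\text{accept}] \le \mathrm{Adv}^{\mathrm{pre}}_H + \mathrm{Adv}^{\mathrm{spr}}_H ,
\]
which is negligible. I would also remark that the second equation in Eq.~\eqref{eq:rect-checks}, combined with the single-use record rule, rules out reuse of an earlier honest opening. An old opening is bound to a different \(\nu_r\) and the record is consumed, but this observation is not needed for the bound.
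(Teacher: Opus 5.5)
Your proposal is correct and follows the paper's proof: the same split on whether the submitted sub-secret equals $s_r$, with the case $s^\ast\neq s_r$ charged to second-preimage resistance of $H$ through the token check $H(s^\ast\parallel\mathtt{verify})=T_r$. The paper dismisses $s^\ast=s_r$ as contradicting the hypothesis that the adversary does not know $s_r$, whereas you turn it into a preimage reduction and rightly observe that an impersonator chooses $C_r$ itself, so the commitment check contributes nothing and no hiding assumption is needed (in fact, once $T_r$ is the embedded preimage challenge, every accepted $s^\ast$ is a preimage, so your second-preimage case is subsumed).
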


\begin{proof}
Acceptance requires a pair \((\widehat{s}_r,\widehat{\rho}_r)\) satisfying
\begin{equation}
    H\!\left(
        \widehat{s}_r \parallel \mathtt{verify}
    \right)
    =
    T_r
    \label{eq:rect-token-adversary}
\end{equation}
and
\begin{equation}
    H\!\left(
        \widehat{\rho}_r
        \parallel \widehat{s}_r
        \parallel \nu_r
        \parallel \operatorname{enc}(r)
    \right)
    =
    C_r.
    \label{eq:rect-commitment-adversary}
\end{equation}

If \(\widehat{s}_r\neq s_r\), then satisfying Eq. \eqref{eq:rect-token-adversary} requires finding a second preimage for the registered token input or otherwise violating the assumed security of \(H\). If the adversary uses the genuine \(s_r\), then it already possesses the required round secret. In addition, the submitted opening must match the commitment bound to the fresh nonce and round index. Under the stated assumptions, an adversary without \(s_r\) satisfies both checks only with negligible probability.
\end{proof}

\begin{securitytheorem}{\footnotesize  Diagonal-Branch Impersonation Resistance}{diag-soundness}
Assume the adversary does not know \(K_r^{(d)}\). The probability that the adversary is accepted in the diagonal branch on a fresh message is negligible under HMAC unforgeability.
\end{securitytheorem}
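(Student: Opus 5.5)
The plan is a direct reduction to HMAC unforgeability (EUF-CMA). Suppose an adversary \(\mathcal{A}\), who does not know \(K_r^{(d)}\), is accepted in the diagonal branch with non-negligible probability \(\epsilon\) on a fresh message. From \(\mathcal{A}\) we will build a forger \(\mathcal{F}\) against \(\operatorname{HMAC}_{K}\) with a uniformly random key \(K\), and then argue that \(\mathcal{F}\) wins with probability essentially \(\epsilon\). This contradicts the HMAC security assumption.

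\textbf{Step 1: key indistinguishability.} First I will justify treating \(K_r^{(d)}\) as a uniformly random key unknown to \(\mathcal{A}\). By Eq. \eqref{eq:diag-key}, \(K_r^{(d)}=\operatorname{HMAC}_{s_r}(\mathtt{key\_diag})\), and by Eq. \eqref{eq:round-secret}, \(s_r\) comes from the high-entropy master secret \(s\). By the protected-storage and secure-registration assumptions, \(\mathcal{A}\) never sees \(K_r^{(d)}\). The intended argument treats HMAC keyed by \(s_r\) as a PRF, so \(K_r^{(d)}\) is indistinguishable from random.

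I expect this step to be the main obstacle, because the stated assumptions only give unforgeability, not pseudorandomness. If that gap cannot be closed, I will fall back on the weaker phrasing of the theorem itself: assume \(K_r^{(d)}\) is unknown, and model it directly as the secret key of the EUF-CMA game. I will also note that \(\mathcal{A}\) may learn other rounds' \(s_{r'}\) from rectilinear openings. These values are independent of \(K_r^{(d)}\) in the random-oracle view of \(H\), so \(\mathcal{F}\) can simulate the other rounds by sampling their secrets itself.

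\textbf{Step 2: the simulation.} \(\mathcal{F}\) runs \(\mathcal{A}\) and plays the verifier for round \(r\). It samples a fresh nonce \(\nu_r\), receives the commitment \(\widehat{C}_r\) chosen by \(\mathcal{A}\), and sets \(b_r=\mathsf{Diag}\). Any honest-prover diagonal tags that \(\mathcal{A}\) observes on other sessions, for messages \(\nu\parallel\operatorname{enc}(r)\parallel C\), are obtained through \(\mathcal{F}\)'s HMAC oracle. Since the single-use rule allows only one attempt on \(\mathcal{L}_r\), there are at most polynomially many such queries. When \(\mathcal{A}\) outputs \(\widehat{\tau}_r\), \(\mathcal{F}\) outputs the forgery \(\bigl(\nu_r\parallel\operatorname{enc}(r)\parallel\widehat{C}_r,\ \widehat{\tau}_r\bigr)\).

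\textbf{Step 3: freshness.} It remains to show the forged message was never queried. The nonce \(\nu_r\) is freshly sampled by the verifier after all earlier oracle interactions, so by the freshness assumption the probability that any prior query contains \(\nu_r\) is at most \(q/|\mathcal{N}|\), which is negligible. By the unambiguous-encoding assumption, distinct field tuples give distinct messages, so no different tuple can collide with \(\nu_r\parallel\operatorname{enc}(r)\parallel\widehat{C}_r\). Hence \(\mathcal{F}\) succeeds with probability at least \(\epsilon-q/|\mathcal{N}|\) (minus the PRF distinguishing advantage if Step 1 uses pseudorandomness). This is non-negligible, which contradicts HMAC unforgeability and proves the theorem.
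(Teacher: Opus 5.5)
Your proposal is correct and follows essentially the same route as the paper: acceptance means $\widehat{\tau}_r$ is a valid tag on $m_r=\nu_r\parallel\operatorname{enc}(r)\parallel\widehat{C}_r$, this message is fresh by nonce freshness and single-use records, and so acceptance amounts to an HMAC forgery. You make the forger explicit and raise the key-modeling step that the paper's short proof leaves implicit (treating $K_r^{(d)}$ directly as the EUF-CMA key, or as pseudorandom); your fallback there is exactly what the paper tacitly assumes.
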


\begin{proof}
Acceptance requires a valid tag for
\begin{equation}
    m_r
    =
    \nu_r
    \parallel \operatorname{enc}(r)
    \parallel C_r.
    \label{eq:diag-message}
\end{equation}
Because \(\nu_r\) is fresh and the record is single-use, \(m_r\) is fresh except with negligible nonce-collision probability. Producing a valid tag for \(m_r\) without \(K_r^{(d)}\) constitutes an HMAC forgery. Therefore, acceptance occurs only with negligible probability.
\end{proof}

\begin{securitycorollary}{ \footnotesize Per-Branch Authentication Security}{auth-soundness}
For either selected branch, an adversary lacking the branch-required secret material is accepted only with negligible probability under the stated assumptions.
\end{securitycorollary}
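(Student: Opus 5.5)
The corollary follows by combining Theorem~\ref{thm:rect-soundness} and Theorem~\ref{thm:diag-soundness}, one case per branch. First I will fix what ``branch-required secret material'' means. For \(b_r=\mathsf{Rect}\) it is the sub-secret \(s_r\), since the response \((s_r,\rho_r)\) must satisfy both checks in Eq.~\eqref{eq:rect-checks}. For \(b_r=\mathsf{Diag}\) it is the key \(K_r^{(d)}\), since the verifier only recomputes the tag in Eq.~\eqref{eq:diag-tag}.

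Next I will condition on the selected branch and split into the two cases.
\begin{itemize}
\item \(b_r=\mathsf{Rect}\), adversary lacking \(s_r\): Theorem~\ref{thm:rect-soundness} bounds the acceptance probability by some negligible \(\epsilon_{\mathrm{H}}\).
\item \(b_r=\mathsf{Diag}\), adversary lacking \(K_r^{(d)}\): Theorem~\ref{thm:diag-soundness} bounds it by some negligible \(\epsilon_{\mathrm{MAC}}+\epsilon_{\mathrm{nonce}}\).
\end{itemize}
Because \(b_r\) is sampled after \(C_r\) is received (Eq.~\eqref{eq:branch-challenge}), the adversary's commitment cannot depend on the branch. Conditioning on each value of \(b_r\) is therefore legitimate, and the per-branch bounds apply directly. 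This also gives an unconditional bound of \(\max(\epsilon_{\mathrm{H}},\epsilon_{\mathrm{MAC}}+\epsilon_{\mathrm{nonce}})\), which is negligible.

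One point needs care: the diagonal case is valid only if the adversary's view earlier in the round does not help it obtain \(K_r^{(d)}\). Within a round, \(C_r\) and \(\nu_r\) reveal nothing about \(K_r^{(d)}\) beyond what is assumed. Across rounds, the single-use records mean that a disclosed \(s_{r'}\) from a Rect round \(r'\neq r\) yields \(s_r\) only by inverting \(H\) to recover \(s\). That would violate preimage resistance given the high-entropy master secret. So an adversary lacking \(s_r\) also lacks \(K_r^{(d)}\), except with negligible probability, because \(K_r^{(d)}\) is derived from \(s_r\) via Eq.~\eqref{eq:diag-key} and storage is protected.

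I expect this last reduction to be the main obstacle: showing that revealing sub-secrets of other rounds does not leak the material needed for round \(r\). I would handle it informally by appealing to the hash assumptions and the protected-storage assumption, treating \(H(s\parallel\operatorname{enc}(r))\) as pseudorandom across distinct \(r\). This is a modelling step the stated assumptions (preimage and second-preimage resistance) do not strictly give, so I would flag it as such rather than claim a tight reduction.
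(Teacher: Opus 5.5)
Your proof is correct and follows the paper, which gives no separate argument for this corollary: it is simply the conjunction of Theorem~\ref{thm:rect-soundness} (an adversary lacking \(s_r\) in the rectilinear branch) and Theorem~\ref{thm:diag-soundness} (an adversary lacking \(K_r^{(d)}\) in the diagonal branch), which is exactly your case split. The cross-round pseudorandomness step you flag as the main obstacle is not needed for the corollary as stated, because its diagonal-branch hypothesis is already that the adversary lacks \(K_r^{(d)}\); it matters only for the stronger claim that lacking \(s_r\) alone suffices in both branches, and there you are right that preimage and second-preimage resistance do not by themselves give it.
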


The corollary is a per-branch statement. QIAP does not claim a soundness error of \(2^{-R}\), and the analysis does not treat repeated rounds as the sole source of authentication security.

\subsection{Replay Resistance}

\begin{securitytheorem}{\footnotesize  Replay Resistance}{replay-resistance}
A transcript accepted in one authentication attempt cannot be replayed successfully in a later attempt except with negligible probability, provided nonces are fresh and round records are not reused.
\end{securitytheorem}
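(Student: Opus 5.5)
We argue by cases on how a recorded transcript \((r,\nu_r,C_r,b_r,\mathrm{Resp}_r)\), accepted in an earlier attempt, could be reused in a later attempt. The single-use record rule, together with nonce freshness, means that any later attempt either refers to a consumed record or uses a different round index with a fresh nonce. Either way, the replayed material must pass verification against a context it was never bound to. The plan is to reduce each such case either to immediate rejection by the state machine of Algorithm~\ref{alg:qiap}, or to the per-branch results of Theorems~\ref{thm:rect-soundness} and~\ref{thm:diag-soundness}, or to a negligible collision event.

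\emph{Case 1: same round index.} If the adversary initiates a later attempt with the same \(r\), then \(\mathcal{L}_r\) was marked consumed after the earlier attempt, whether that attempt succeeded or failed. The consumed check in Algorithm~\ref{alg:qiap} therefore returns \((0,0)\) deterministically, and nothing further is needed.

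\emph{Case 2: a different, unused round \(r'\neq r\).} The verifier sends a fresh nonce \(\nu_{r'}\). We split on the branch challenge \(b_{r'}\).
\begin{itemize}
\item \emph{Rectilinear branch.} Acceptance requires an opening \((\widehat s,\widehat\rho)\) with \(H(\widehat s\parallel\mathtt{verify})=T_{r'}\). A replayed \(s_r\) satisfies this only if \(T_r=T_{r'}\). Since \(s_r\) and \(s_{r'}\) are derived from distinct encodings \(\operatorname{enc}(r)\neq\operatorname{enc}(r')\), such an equality is a collision or second-preimage event for \(H\), which is negligible. Otherwise the adversary lacks \(s_{r'}\), and Theorem~\ref{thm:rect-soundness} applies. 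We must note that revealing \(s_r\) does not help recover \(s_{r'}\); this follows from preimage resistance in the same way as master-secret privacy.
\item \emph{Diagonal branch.} The message \(\nu_{r'}\parallel\operatorname{enc}(r')\parallel C_{r'}\) differs from any previously tagged message, because the nonce is fresh and encodings are unambiguous. This holds except with nonce-collision probability \(q^2/2^{|\nu|}\) over \(q\) attempts. The replayed tag \(\tau_r\) also verifies under a different key \(K^{(d)}_{r'}\). Treating the old transcripts as chosen-message queries, we reduce to HMAC unforgeability as in Theorem~\ref{thm:diag-soundness}.
\end{itemize}

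\emph{Cross-branch and splicing attempts.} These include pairing an old \(C_r\) with a new nonce, or reusing one branch's response under the other challenge. Such attempts are rejected by the commitment-consistency check in the rectilinear branch, since \(C_r\) binds \(\nu_r\) and \(r\) and matching a new context requires a second preimage of \(H\). In the diagonal branch they are rejected because the tag binds \(\nu\), \(r\), and \(C\). A union bound over the polynomially many attempts then yields a negligible total success probability.

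The main obstacle is the rectilinear subcase of Case~2. There, a previously disclosed \(s_r\) is public, and we must argue it provides no advantage toward \(T_{r'}\). I would handle this by modeling \(H\) as a random oracle for the derivation \(s_{r'}=H(s\parallel\operatorname{enc}(r'))\), so that \(s_{r'}\) is independent of disclosed \(s_r\) given high-entropy \(s\). If a random oracle is undesired, I would instead state explicitly that the claim needs this independence (a PRF-style assumption on \(H(s\parallel\cdot)\)), since preimage resistance alone does not formally give it.
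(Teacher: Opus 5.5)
Your argument is sound and uses the same ingredients as the paper's: consumption of \(\mathcal{L}_r\), binding of \(C_r\) and \(\tau_r\) to \(\nu_r\) and \(r\), and HMAC unforgeability. It is organized differently, however, and proves a stronger property.

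The paper splits by branch and treats replay as verbatim resending. The captured \((C_r,s_r,\rho_r)\) or \(\tau_r\) is bound to the old nonce, so it fails under any new nonce (by second-preimage resistance, resp.\ unforgeability). A second attempt on the same index is stopped by record consumption. That proof stays essentially within the listed assumptions. It does not address an adversary who uses a disclosed \(s_r\) adaptively, for example to build a fresh opening in another round \(r'\).

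You split first by round index and, in the fresh-round case, reduce to the per-branch impersonation theorems, which forces exactly that question. Your conclusion that this needs a PRF or random-oracle assumption on \(t\mapsto H(s\parallel t)\) is right. Preimage and second-preimage resistance constrain inversion of \(H\), not correlations between outputs on related inputs. For instance, a structure such as \(H(s\parallel t)=G(s)\oplus f(t)\), with \(G\) an idealized one-way permutation and \(f\) public, resists both attacks, yet \(s_r\) determines \(s_{r'}\). So your route buys coverage of adaptive reuse at the price of an assumption the paper does not state. The paper's proof is shorter and needs only its stated assumptions, but covers only the literal reading of ``replay.''

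Three repairs are needed.
\begin{enumerate}
\item Drop the remark in your rectilinear bullet that independence ``follows from preimage resistance in the same way as master-secret privacy.'' Non-recoverability of \(s\) does not imply unpredictability of \(s_{r'}\), as your own last paragraph concedes.
\item Invoke the same independence assumption in the diagonal subcase, not only the rectilinear one. \(K^{(d)}_{r'}=\operatorname{HMAC}_{s_{r'}}(\mathtt{key\_diag})\) is computable by anyone who can derive \(s_{r'}\). Also, the earlier tags were issued under other keys, so treating them as chosen-message queries is beside the point. The reduction is really a zero-query forgery under \(K^{(d)}_{r'}\), and that is valid only once this key is independent of the adversary's view.
\item The event \(T_r=T_{r'}\) is a collision between honestly generated values. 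It is covered by collision resistance or by your random-oracle model, not by the second-preimage assumption the paper lists.
\end{enumerate}
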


\begin{proof}
In the rectilinear branch, \(C_r\) is bound to \(\nu_r\) and \(r\) by Eq. \eqref{eq:commitment}. A captured commitment and opening therefore fail under a new nonce. Reuse under the same round index is also rejected because \(\mathcal{L}_r\) has been consumed.

In the diagonal branch, \(\tau_r\) is bound to \(\nu_r\), \(r\), and \(C_r\). A captured tag does not verify for a new nonce or commitment. Producing a valid replacement tag without \(K_r^{(d)}\) would violate HMAC unforgeability. Hence, replay succeeds only if freshness or record-consumption assumptions fail, or with negligible cryptographic probability.
\end{proof}

\subsection{Master-Secret Protection}

\begin{securitytheorem}{\footnotesize  Master-Secret Protection}{master-secret-privacy}
Assume \(s\) is sampled from a high-entropy key space and \(H\) is preimage resistant. Observing protocol transcripts, including disclosed round sub-secrets from rectilinear rounds, does not enable efficient recovery of \(s\).
\end{securitytheorem}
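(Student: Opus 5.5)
The plan is a reduction: an efficient algorithm that recovers \(s\) from transcripts would yield an efficient inverter for \(H\), or would contradict the entropy of \(s\). The first step is to list everything in the adversary's view that depends on \(s\). By Eqs. \eqref{eq:round-secret}--\eqref{eq:diag-tag}, the master secret enters the transcript only through the round sub-secrets \(s_r=H(s\parallel\operatorname{enc}(r))\). Every other value is a function of \(s_r\) together with public or fresh data: the commitments \(C_r\), the disclosed pairs \((s_r,\rho_r)\) from rectilinear rounds, and the tags \(\tau_r\) from diagonal rounds. The verifier records \(T_r,K_r^{(d)}\) are excluded from the view by the protected-storage assumption. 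The nonces \(\nu_r\), the randomness \(\rho_r\) and the challenges \(b_r\) are sampled independently of \(s\). So the whole transcript can be generated efficiently from the set \(\{s_r\}_{r}\) alone.

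The second step reduces the problem to the strongest case. I would grant the adversary every \(s_r\) for \(r\in\{0,\ldots,R-1\}\), which covers any mix of rectilinear and diagonal disclosures. A transcript-based recovery algorithm \(\mathcal{A}\) then becomes an algorithm \(\mathcal{A}'\) that receives the values \(\{H(s\parallel\operatorname{enc}(r))\}_r\) and outputs \(s\). Here \(\mathcal{A}'\) simulates the transcript itself, using the previous step.

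The third step uses \(\mathcal{A}'\) to contradict preimage resistance. Given a target \(y=H(s\parallel\operatorname{enc}(0))\), one would like to recover a preimage by running \(\mathcal{A}'\). The catch is that \(\mathcal{A}'\) expects all \(R\) correlated images of the same \(s\), and a standard preimage challenger supplies only one. This is where I expect the main obstacle. Preimage resistance alone does not strictly cover correlated multi-point images, so the argument needs an extra ingredient. I would first try modeling \(H(s\parallel\cdot)\) as a random oracle, or equivalently assuming it is a PRF keyed by \(s\). Under that model the values \(s_r\) reveal information about \(s\) only through oracle queries of the form \(s'\parallel\operatorname{enc}(r)\). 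A recovery attack making \(q\) queries then succeeds with probability at most about \(q\cdot 2^{-\lambda}\), where \(\lambda\) is the min-entropy of \(s\) under the high-entropy assumption. In other words, efficient recovery amounts to exhaustive guessing over a space of size \(2^{\lambda}\), which is negligible.

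Finally, I would state this dependence explicitly in the proof: the claim holds under the high-entropy and protected-storage assumptions, with \(H\) treated as a random oracle or with preimage resistance strengthened to its multi-input form. I would also note that the high-entropy assumption is essential, because a low-entropy \(s\) would allow offline dictionary search against any single disclosed \(s_r\).
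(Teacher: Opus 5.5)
Your proposal is correct and follows the same skeleton as the paper's proof. In both, $s$ enters the transcript only through $s_r=H(s\parallel\operatorname{enc}(r))$. Every other value ($C_r$, $\tau_r$, the rectilinear openings) is computable from $s_r$ and fresh data, so recovering $s$ means inverting $H$ or searching a high-entropy space.

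The difference is at the last step. The paper asserts that recovery ``requires either inverting the hash computation or searching the master-secret space'' and credits this to preimage resistance. You make the reduction explicit and notice where it breaks: the adversary holds up to $R$ correlated images of one $s$, while a preimage challenger supplies only one.

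That obstacle is real, not an artifact of your route. Write $s=(a,b)$, let $g$ be a one-way permutation, and set $H(s\parallel\operatorname{enc}(r))=g(a)\parallel g(b)\parallel c_r\parallel\operatorname{enc}(r)$, with $c_r=a$ for even $r$ and $c_r=b$ for odd $r$, and a domain-separated hash on all other inputs. This $H$ is preimage and collision resistant, since inverting any single image requires inverting $g$ on the hidden half. Yet $s_0$ and $s_1$ together reveal $s$. So the theorem as stated needs more than preimage resistance, and the paper's proof uses that extra assumption implicitly. Your version gives an honest statement and a concrete $q\cdot 2^{-\lambda}$ bound in the random-oracle model, at the cost of naming the stronger assumption.

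Two small fixes:
\begin{itemize}
\item The random-oracle model and the PRF assumption are not ``equivalent'', though each suffices. Under the PRF assumption, a distinguisher queries $0,\ldots,R-1$, runs the recovery algorithm, and tests the recovered $s'$ on the fresh input $R$.
\item The appeal to protected storage is unnecessary here. $T_r$ and $K_r^{(d)}$ are functions of $s_r$ and would be simulatable anyway.
\end{itemize}
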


\begin{proof}
A rectilinear transcript may reveal
\begin{equation}
    s_r
    =
    H\!\left(
        s \parallel \operatorname{enc}(r)
    \right).
\end{equation}
Recovering \(s\) from one or more disclosed values \(s_r\) requires either inverting the hash computation or searching the master-secret space. Because \(s\) is assumed to have high entropy, exhaustive search is computationally infeasible. The remaining transcript values are derived from \(s_r\), fresh randomness, nonces, commitments, or HMAC computations and do not provide a more direct recovery method under the stated assumptions.
\end{proof}

This theorem protects only the master secret. It does not claim zero knowledge, because the rectilinear branch discloses the round-specific value \(s_r\). It also does not apply when \(s\) is a low-entropy password, since disclosed \(s_r\) values would permit offline testing of password guesses.

\subsection{Verifier-Storage Limitation}

\begin{securitytheorem}{\footnotesize  Effect of Diagonal-Key Disclosure}{diag-key-disclosure}
If an adversary obtains an unused stored key \(K_r^{(d)}\), then the adversary can generate a valid diagonal-branch response for that round.
\end{securitytheorem}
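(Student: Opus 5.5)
The argument is constructive. We exhibit an adversary that, given the unused stored key \(K_r^{(d)}\), runs the diagonal branch exactly as the honest prover would. We then check that the verifier's acceptance test cannot separate this adversary from the honest prover. The key observation is that the diagonal check of Eq.~\eqref{eq:diag-tag} depends only on \(K_r^{(d)}\), the nonce \(\nu_r\), the round index \(r\), and the received commitment \(C_r\). It never involves \(s\), \(s_r\), or \(\rho_r\) directly.

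The steps, in order, are as follows.

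\emph{Step 1.} Since \(\mathcal{L}_r\) is unused, the verifier's first check in Algorithm~\ref{alg:qiap} passes. The verifier then sends \((r,\nu_r)\).

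\emph{Step 2.} The adversary does not know \(s_r\), so it cannot form a genuine commitment. Instead it sends an arbitrary string \(\widehat{C}_r\), for example a uniformly random value of the hash output length. The verifier performs no check on \(C_r\) at this point, so any well-formed string is accepted as the commitment.

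\emph{Step 3.} Suppose the verifier samples \(b_r=\mathsf{Diag}\). The adversary computes
\(\widehat{\tau}_r=\operatorname{HMAC}_{K_r^{(d)}}(\nu_r\parallel\operatorname{enc}(r)\parallel\widehat{C}_r)\).

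\emph{Step 4.} The verifier recomputes the tag from the stored \(K_r^{(d)}\), the current \(\nu_r\), \(r\), and the received \(\widehat{C}_r\). HMAC is a deterministic function of its key and message, so the recomputed tag equals \(\widehat{\tau}_r\). The round is therefore accepted. This is the same determinism argument used in the diagonal part of Theorem~\ref{thm:auth-completeness}, with \(\widehat{C}_r\) in place of an honest \(C_r\).

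Two remarks clarify the scope of the result. First, the statement concerns only the diagonal branch, which is accepted with probability one whenever that branch is challenged. If the verifier selects \(\mathsf{Rect}\), the adversary would still need \(s_r\), and Theorem~\ref{thm:rect-soundness} continues to apply. Over a uniform branch choice the adversary therefore succeeds with probability at least \(1/2\) per round. Second, the result does not contradict Theorem~\ref{thm:diag-soundness}. That theorem assumes the adversary does not know \(K_r^{(d)}\), and this assumption is exactly what is dropped here. This is why the protected-storage assumption is necessary.

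The only step needing care is Step 2, where one must confirm that the verifier does not validate \(C_r\) in the diagonal branch. Inspecting the diagonal-branch description and Algorithm~\ref{alg:qiap} shows that the only test applied there is tag equality, so an arbitrary commitment suffices. As an alternative, one could avoid this observation entirely. The adversary could send any string as \(C_r\) and bind its tag to that string, since the message being tagged is whatever commitment the verifier actually received.
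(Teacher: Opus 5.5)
Your proof is correct and uses the same idea as the paper's: the diagonal tag is a deterministic HMAC of $\nu_r\parallel\operatorname{enc}(r)\parallel C_r$ keyed only by $K_r^{(d)}$, so holding the key suffices to produce a verifying tag for whatever nonce and commitment the verifier holds. Your explicit point that the verifier never checks $C_r$ in the diagonal branch, so an arbitrary commitment works, spells out what the paper compresses into ``for any selected nonce and commitment.''
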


\begin{proof}
The diagonal response is computed directly as
\[
    \tau_r
    =
    \operatorname{HMAC}_{K_r^{(d)}}
    \!\left(
        \nu_r
        \parallel \operatorname{enc}(r)
        \parallel C_r
    \right).
\]
Knowledge of \(K_r^{(d)}\) is therefore sufficient to compute a valid tag for any selected nonce and commitment for the same unused record.
\end{proof}

This result is why protected verifier storage is part of the threat model. QIAP does not provide verifier-database-compromise resistance for the diagonal branch.

\subsection{BB84 Channel Validation}

\begin{securitytheorem}{\footnotesize Intercept--Resend Disturbance}{qber-detection}
Under the idealized BB84 intercept--resend model, interception of transmitted qubits increases the expected QBER relative to an otherwise identical clean channel.
\end{securitytheorem}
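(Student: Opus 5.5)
The plan is to compute the expected error probability on a single sifted bit explicitly, once for the clean channel and once for the channel with interception, and then compare the two. We model the clean channel as a binary symmetric channel with intrinsic error probability \(e_0\in[0,\tfrac12)\), applied to the sifted bits. This covers the baseline of approximately \(0.02\) seen in simulation. We also assume Alice's bits and bases are uniform and independent. Under the idealized intercept--resend model, Eve intercepts each qubit with probability \(p\in(0,1]\). She measures it in a basis chosen uniformly and independently of Alice's and Bob's choices, and resends the post-measurement state. Because Eq.~\eqref{eq:qber} is a ratio of counts over the sifted positions, its expectation reduces, conditionally on \(N_{\mathrm{sifted}}>0\), to the per-bit error probability \(\Pr[\text{error}\mid\text{sifted}]\). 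This follows by linearity of expectation and the exchangeability of the sifted positions.

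The key step is conditioning on a sifted position, that is, on the event that Alice's and Bob's bases agree. We split into cases according to whether Eve intercepted the qubit and, if so, which basis she used.
\begin{itemize}
\item If Eve did not intercept, or she intercepted and chose the matching basis (probability \(\tfrac12\) given interception), Bob receives Alice's state unchanged. The only errors are then the channel errors, with probability \(e_0\).
\item If Eve intercepted and chose the wrong basis, the resent state is unbiased with respect to Alice's basis. Bob's outcome is then uniformly random, giving error probability \(\tfrac12\), which the channel noise leaves unchanged because a BSC maps a uniform bit to a uniform bit.
\end{itemize}
Combining the cases gives
\begin{equation}
\mathbb{E}[\mathrm{QBER}_p]=\Bigl(1-\tfrac{p}{2}\Bigr)e_0+\tfrac{p}{2}\cdot\tfrac12
= e_0+\tfrac{p}{2}\Bigl(\tfrac12-e_0\Bigr).
\end{equation}
Since \(e_0<\tfrac12\) and \(p>0\), this is strictly larger than \(e_0=\mathbb{E}[\mathrm{QBER}_0]\), which proves the statement. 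In the noiseless full-interception case (\(e_0=0\), \(p=1\)) the formula gives the familiar value \(\tfrac14\), consistent with the reported \(0.25\).

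The main obstacle is to make ``otherwise identical clean channel'' and the expectation of a ratio rigorous. Two points need care. First, Eve's operation must not change which positions are sifted. This holds because sifting depends only on the publicly announced bases of Alice and Bob, which are independent of Eve's choices under the authenticated-classical-channel assumption. Second, the event \(N_{\mathrm{sifted}}=0\), which has probability \(2^{-N_q}\), must be handled. We plan to condition on \(N_{\mathrm{sifted}}>0\), or equivalently to define the QBER as \(0\) in that case, and to note that conditioning on the number of sifted positions leaves the per-position error law unchanged. The resulting monotone dependence on \(p\) also shows that any positive interception rate raises the expected QBER.
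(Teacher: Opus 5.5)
Your proof is correct, and it rests on the same mechanism as the paper's proof. An interceptor who does not know Alice's basis measures in the wrong one with probability \(\tfrac12\), and this randomizes Bob's outcome on sifted positions.

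The difference is in how far each argument goes. The paper's proof is a qualitative sketch: disturbance ``can produce'' a mismatched sifted bit, so the expected mismatch rate rises. You instead condition on the sifted event, split by Eve's action, and obtain the closed form \(e_0+\tfrac{p}{2}\bigl(\tfrac12-e_0\bigr)\). This buys several things:
\begin{itemize}
\item It exposes the hypothesis \(e_0<\tfrac12\), which the paper's argument silently needs. A clean channel already at QBER \(\tfrac12\) cannot be pushed higher, so ``disturbance produces errors'' alone does not give a strict increase.
\item It gives monotonicity in the interception rate and recovers the \(0.25\) figure reported in the simulation.
\item It handles the fact that Eq.~\eqref{eq:qber} is a ratio of random counts, and that Eve cannot influence which positions are sifted.
\end{itemize}
The price is committing to a noise model, but your BSC-on-outcomes model costs little here. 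Depolarizing noise, as in the paper's simulation, followed by measurement in either BB84 basis acts exactly as a BSC on the outcome. Placing that noise before or after Eve leaves your formula unchanged.

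Two small remarks. First, uniformity of Eve's basis choice is unnecessary: any choice independent of Alice's uniformly random basis is wrong with probability \(\tfrac12\). Second, defining the QBER as \(0\) when \(N_{\mathrm{sifted}}=0\) is not literally equivalent to conditioning on \(N_{\mathrm{sifted}}>0\), since it multiplies the expectation by \(\Pr[N_{\mathrm{sifted}}>0]\). That factor is the same for both channels, so the strict inequality still holds.
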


\begin{proof}
The sender and receiver independently select rectilinear or diagonal bases. An intercepting adversary does not know the sender's basis and may measure in the incompatible basis. Such a measurement disturbs the transmitted state. When the receiver later measures in the sender's basis, the disturbance can produce a mismatched sifted bit. Repeated interception therefore increases the expected mismatch rate and shifts the QBER distribution upward relative to the clean-channel case.
\end{proof}

The theorem applies to the modeled intercept--resend attack. Elevated QBER indicates channel disturbance but does not, by itself, identify eavesdropping as the cause.

\begin{securitytheorem}{\footnotesize  QBER-Based Session Abort}{qber-abort}
If the measured QBER satisfies \(\mathrm{QBER}\geq\theta\), QIAP sets \(\mathsf{SessionOK}=0\) and accepts no session key.
\end{securitytheorem}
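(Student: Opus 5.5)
The claim follows directly from the protocol definition, so I will argue by inspecting the decision rule in Eq.~\eqref{eq:session-decision} and the control flow of Algorithm~\ref{alg:qiap}, with no cryptographic reduction. First I will note that the BB84 phase is reached only after $\mathsf{AuthOK}=1$. If authentication fails, the algorithm returns $(0,0)$ earlier, so $\mathsf{SessionOK}=0$ and no key is produced. The substantive case is therefore the one where authentication succeeds and the QBER has been estimated via Eq.~\eqref{eq:qber}.

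In that case I will go through the conditional in the BB84 block. Its branch condition is the strict inequality $\mathrm{QBER}<\theta$. Under the hypothesis $\mathrm{QBER}\geq\theta$ this condition is false, so control passes to the else-branch, which sets $\mathsf{SessionOK}\gets 0$ and aborts. This agrees with the second case of Eq.~\eqref{eq:session-decision}; the two cases there are mutually exclusive and exhaustive, so the decision is well defined.

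Next I will show that no session key is accepted. In the algorithm, $K_{ij}$ is derived only inside the $\mathrm{QBER}<\theta$ branch, during sifting, error correction and privacy amplification. That branch is not executed, so $K_{ij}$ is never derived and the return value carries $\mathsf{SessionOK}=0$. By the methodology description, $\mathsf{SessionOK}=0$ means no session key is accepted.

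The one point needing care, and the main obstacle, is ruling out a way for an adversary to override the abort, for example by injecting forged classical post-processing messages or splicing in a different quantum session so that a key is accepted anyway. I will handle this with the authenticated BB84 classical channel assumption. The abort decision and all subsequent post-processing messages are authenticated and bound to the completed QIAP exchange, so an unauthenticated party cannot substitute an acceptance. I will also state the scope explicitly: the theorem concerns the measured QBER, not the true channel error rate. Statistical estimation error is outside this statement and is left to the finite-key caveats discussed with Eq.~\eqref{eq:qber-threshold}.
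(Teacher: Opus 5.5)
Your proposal is correct and matches the paper's proof, which simply reads the abort off the decision rule in Eq.~\eqref{eq:session-decision}: acceptance occurs only when $\mathrm{QBER}<\theta$, so $\mathrm{QBER}\geq\theta$ forces $\mathsf{SessionOK}=0$ and no key. Your extra steps are harmless but unnecessary. The authentication-failure case is vacuous, because BB84 never runs there and so no QBER is measured. The authenticated-channel argument is also not needed, since the theorem is conditioned on the measured QBER and the abort is a local decision rule.
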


\begin{proof}
The result follows directly from the decision rule in Eq. \eqref{eq:session-decision}. QIAP accepts the BB84 stage only when \(\mathrm{QBER}<\theta\). Therefore, any measured value satisfying \(\mathrm{QBER}\geq\theta\) causes session abort.
\end{proof}

\subsection{Security Statement}

\begin{securitytheorem}{\footnotesize  Conditional Security Properties of QIAP}{combined-security}
Under the assumptions of hash security, HMAC unforgeability, high master-secret entropy, fresh randomness, fresh nonces, unambiguous encoding, single-use records, protected verifier storage, secure registration, and authenticated BB84 classical communication, QIAP provides authentication completeness, per-branch computational impersonation resistance, replay resistance, master-secret protection, and threshold-based session abort under the stated threat model.
\end{securitytheorem}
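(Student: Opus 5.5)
The plan is to prove the combined statement as a conjunction of the earlier results. For each listed property I will invoke the theorem that establishes it and check that the assumptions that theorem needs are among those listed in the statement. Completeness follows from Theorem~\ref{thm:auth-completeness}. It needs only honest execution and consistent use of the same record \(\mathcal{L}_r\), and secure registration guarantees this because the stored \(T_r\) and \(K_r^{(d)}\) are exactly the values derived from \(s_r\) in Eqs.~\eqref{eq:rect-token} and \eqref{eq:diag-key}. Unambiguous encoding ensures that prover and verifier parse concatenations identically.

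Next I will handle per-branch impersonation resistance via Corollary~\ref{cor:auth-soundness}, which rests on Theorem~\ref{thm:rect-soundness} (hash security) and Theorem~\ref{thm:diag-soundness} (HMAC unforgeability). I will check the premises of those theorems explicitly. Adversary ignorance of \(s_r\) follows from high master-secret entropy together with preimage resistance, since \(s_r=H(s\parallel \operatorname{enc}(r))\) and \(s_r\) is disclosed only in its own, already-consumed round. Ignorance of \(K_r^{(d)}\) follows from protected verifier storage and secure registration. Freshness of \(m_r\) in Eq.~\eqref{eq:diag-message} follows from fresh nonces and single-use records.

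Replay resistance is Theorem~\ref{thm:replay-resistance}, whose hypotheses are fresh nonces and single-use records. Master-secret protection is Theorem~\ref{thm:master-secret-privacy}, using high entropy and preimage resistance. Threshold abort is Theorem~\ref{thm:qber-abort}, which is immediate from Eq.~\eqref{eq:session-decision}. The authenticated BB84 classical channel assumption is used here to ensure that the measured QBER actually pertains to the session bound to the completed authentication, so that an adversary cannot splice in a different, low-error QBER estimate.

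The main obstacle is composition. The individual theorems are stated for single rounds or single transcripts, whereas the combined statement concerns a multi-round session with an adversary who may interleave attempts. I will first address this with a union bound. Each round's acceptance probability for an adversary lacking the branch secret is negligible, and a polynomial number of rounds and attempts keeps the total negligible. I must also argue that material disclosed in earlier rounds does not help against a later round. For the rectilinear branch, a disclosed \(s_{r'}\) gives no information on \(s_r\) for \(r\neq r'\) beyond what is implied by Theorem~\ref{thm:master-secret-privacy}; I will treat \(H\) heuristically as a random oracle if a rigorous reduction is needed. Record consumption rules out reusing round \(r'\). Finally, I will note the stated limitation of Theorem~\ref{thm:diag-key-disclosure}: the protected-storage assumption is essential, and without it diagonal-branch impersonation is possible.
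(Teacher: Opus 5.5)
Your skeleton matches the paper's proof. The paper cites Theorems~\ref{thm:auth-completeness}, \ref{thm:rect-soundness}, \ref{thm:diag-soundness}, \ref{thm:replay-resistance}, \ref{thm:master-secret-privacy}, and \ref{thm:qber-abort} for the five properties, and uses Theorem~\ref{thm:diag-key-disclosure} to explain why protected storage is indispensable.

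\textbf{The step that fails.} You go further and try to derive the hypotheses ``the adversary does not know \(s_r\)'' and ``does not know \(K_r^{(d)}\)'' from the listed assumptions. This does not work. The prover answers any \((r,\nu)\) it receives and never authenticates the verifier. It also cannot tell whether round \(r\) is already consumed at the genuine verifier. An active adversary can therefore proceed as follows:
\begin{enumerate}
\item Pose as the verifier and send \((r,\nu)\) for a round still unused at the genuine verifier.
\item Reply to \(C_r\) with \(b_r=\mathsf{Rect}\) and obtain \(s_r\).
\item Pass round \(r\) at the genuine verifier with certainty in either branch: open a fresh commitment with \(s_r\), or compute \(K_r^{(d)}=\operatorname{HMAC}_{s_r}(\mathtt{key\_diag})\) and then the tag.
\end{enumerate}
Protected storage does not help, because \(K_r^{(d)}\) is derivable from \(s_r\). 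So ``\(s_r\) is disclosed only in its own, already-consumed round'' is not implied by the stated assumptions, and neither is ignorance of \(K_r^{(d)}\).

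\textbf{Cross-round independence.} Your composition argument also needs more than the statement grants. Preimage and second-preimage resistance do not make \(s_r\) unpredictable from \(s_{r'}\). Take \(H(x\parallel y)=G(x)\parallel h(y)\), where \(x\) has the master-secret length and \(G,h\) are preimage and second-preimage resistant. Each disclosed \(s_{r'}\) then reveals \(G(s)\), and hence every \(s_r\), while Theorem~\ref{thm:master-secret-privacy} still holds. The random-oracle or PRF model you mention as a fallback is therefore a genuinely additional assumption, not a convenience.

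\textbf{How the paper avoids this.} In the statement, ``per-branch impersonation resistance'' means the conditional property of Corollary~\ref{cor:auth-soundness}: an adversary lacking the branch-required material is rejected except with negligible probability. The paper never derives that hypothesis from the listed assumptions. If you keep it as a hypothesis, your proof reduces to the paper's and the union-bound composition is unnecessary. If you want the unconditional reading, you must add assumptions outside the statement. Two would be needed: verifier authentication toward the prover (or any mechanism that blocks rectilinear openings for rounds the genuine verifier has not consumed), and a PRF or random-oracle model of \(H\).
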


\begin{proof}
Authentication completeness follows from Theorem~\ref{thm:auth-completeness}. Rectilinear- and diagonal-branch impersonation resistance follow from Theorems~\ref{thm:rect-soundness} and~\ref{thm:diag-soundness}. Replay resistance follows from Theorem~\ref{thm:replay-resistance}. Master-secret protection follows from Theorem~\ref{thm:master-secret-privacy}. QBER-based session abort follows from Theorem~\ref{thm:qber-abort}. These properties remain conditional on protected verifier storage, as shown by Theorem~\ref{thm:diag-key-disclosure}, and on the remaining assumptions listed above.
\end{proof}

\section{Results and Discussion}

\subsection{Experimental Setup}

QIAP was evaluated in a hybrid classical--quantum simulation covering wrong-secret authentication, BB84 channel disturbance, noise sensitivity, execution time, and communication cost. The classical layer used SHA-256 for round-secret derivation, verification tokens, and commitments, and HMAC-SHA-256 for diagonal-branch tags. Each authentication session used \(R=20\) single-use round records, a fresh \(128\)-bit verifier nonce, fresh \(256\)-bit prover randomness, and a \(256\)-bit HMAC tag.

Each BB84 trial simulated \(N_q=1024\) single-qubit transmissions under clean and full intercept--resend conditions. Depolarizing noise was varied from \(p_e=0\) to \(0.20\) in steps of \(0.01\). The session threshold was fixed at \(\theta=0.11\), and the session was aborted when \(\mathrm{QBER}\geq\theta\).

Wrong-secret authentication was tested over 1,000 independent attempts per branch. Clean and intercept--resend QBER behavior was evaluated over \(500\) trials per condition, while noise sensitivity was evaluated over \(300\) trials for each \(p_e\). Fixed pseudorandom seeds were used for reproducibility. Execution time was measured as software wall-clock time, and communication cost was computed from the transmitted protocol fields.

The implementation used Python, NumPy, Matplotlib, \texttt{hashlib}, \texttt{hmac}, and \texttt{secrets}. Experiments were executed on an AMD Ryzen 5 PRO 4650U processor with \(16\)~GB RAM under Ubuntu~22.04 through Windows Subsystem for Linux.

\subsection{Wrong-Secret Authentication}

QIAP rejected all 1,000 wrong-secret attempts in each branch, producing an observed rejection rate of \(100\%\) under the tested implementation. No false rejection was observed for honest authentication attempts.

In the rectilinear branch, an incorrect \(s_r\) fails the registered-token check and cannot produce a valid opening of the nonce-bound commitment. In the diagonal branch, an incorrect secret derives a different key and produces an invalid HMAC tag.

With zero successful impersonations in 1,000 trials, the rule-of-three approximation gives an upper \(95\%\) confidence bound of approximately
\begin{equation}
    p_{\mathrm{imp}}
    <
    \frac{3}{1000}
    =
    0.003
\end{equation}
for the tested attack generator. This bound characterizes the observed attack success rate in the experiment and is distinct from the cryptographic forgery probability of SHA-256 or HMAC-SHA-256.

\subsection{BB84 Channel Disturbance}

\begin{figure}[t]
    \centering
    \includegraphics[width=1\linewidth]{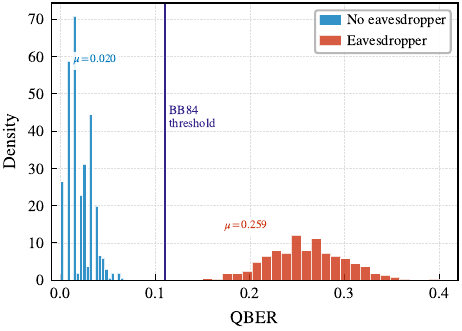}
    \caption{Simulated QBER distributions for clean and full intercept--resend channels. The clean condition is centered near \(\mu\approx0.02\), while the intercept--resend condition is centered near \(\mu\approx0.25\).}
    \label{fig:qber_discrimination}
\end{figure}

As shown in Fig.~\ref{fig:qber_discrimination}, the clean and intercept--resend conditions produced well-separated QBER distributions. The clean channel had a mean QBER of approximately \(0.02\), whereas the full intercept--resend condition had a mean near \(0.25\). With \(\theta=0.11\), the two tested distributions were separated by a clear margin.

This result is consistent with the disturbance expected from full intercept--resend measurement in BB84. The observed behavior corresponds to the simulated full-interception setting with the selected qubit count and QBER threshold.

\subsection{Execution Time}

\begin{figure}[t]
    \centering
    \includegraphics[width=1\linewidth]{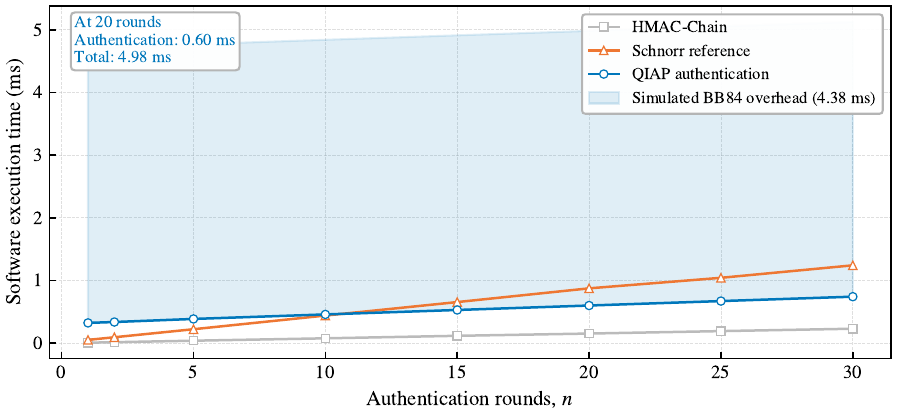}
    \caption{Software execution-time comparison among HMAC-chain, Schnorr authentication, and QIAP. The QIAP authentication stage remains below \(1\)~ms, while the simulated BB84 stage adds separate processing time.}
    \label{fig:authentication_latency}
\end{figure}

As shown in Fig.~\ref{fig:authentication_latency}, the classical QIAP authentication stage required approximately \(0.6\)~ms on the evaluated software and hardware platform. Including the simulated BB84 stage increased the total execution time to approximately \(5\)~ms. These values measure local software execution and do not represent optical transmission delay, detector delay, reconciliation cost, or hardware QKD latency.

The HMAC-chain baseline was faster because it performed fewer cryptographic operations. Schnorr authentication represents a public-key reference with a different security and computation profile. The comparison therefore focuses on software execution cost across the evaluated implementations.

\subsection{Communication Cost}

\begin{figure}[ht]
    \centering
    \includegraphics[width=1\linewidth]{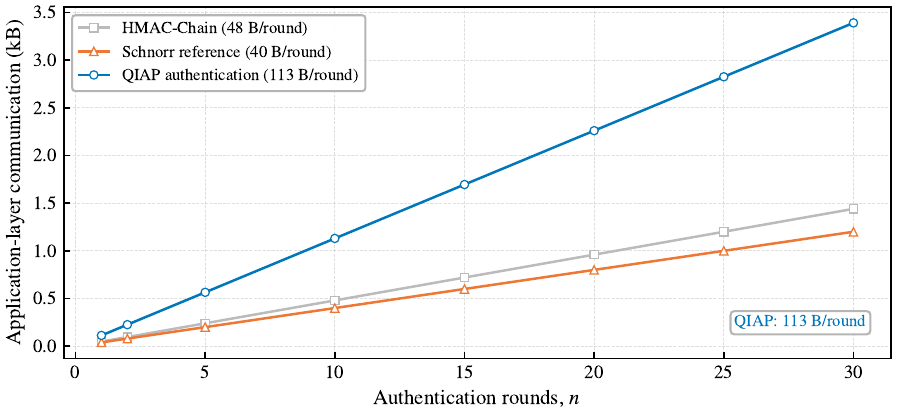}
    \caption{Communication cost as a function of the number of authentication rounds. QIAP grows linearly at approximately \(113\) bytes per round in the implemented message format.}
    \label{fig:communication_overhead}
\end{figure}

\begin{figure*}[ht]
    \centering
    \includegraphics[width=\linewidth]{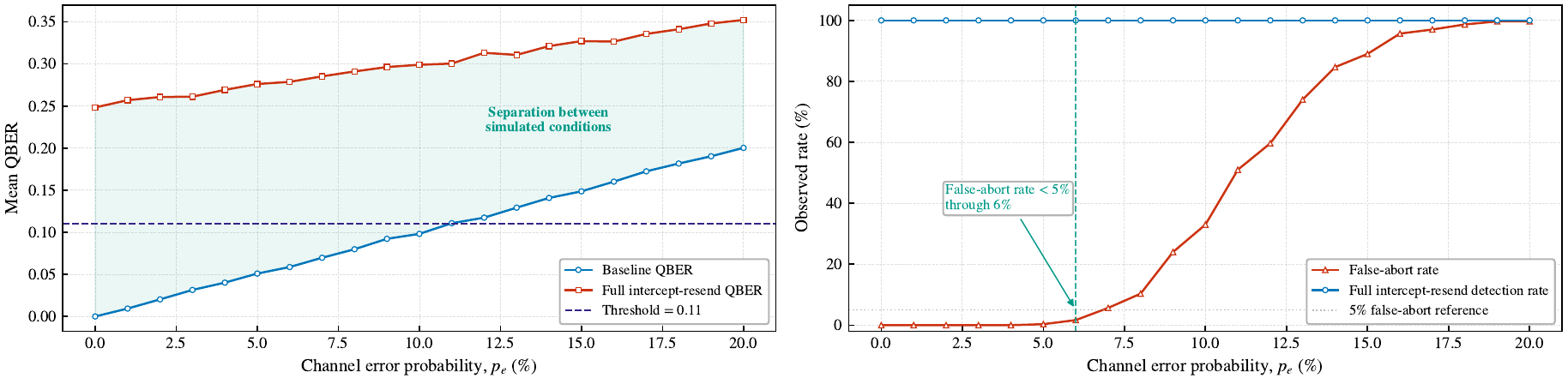}
    \caption{Effect of depolarizing error probability \(p_e\) on simulated BB84 behavior: (a) QBER under clean and intercept--resend conditions and (b) disturbance-detection and false-abort rates for \(\theta=0.11\).}
    \label{fig:qber_decision_rates}
\end{figure*}
The transmitted cryptographic fields have branch-dependent sizes. The fixed fields are the \(4\)-byte round identifier, \(16\)-byte nonce, \(32\)-byte commitment, and \(1\)-byte branch challenge. The rectilinear response contains a \(32\)-byte sub-secret and \(32\)-byte randomness value, while the diagonal response contains a \(32\)-byte HMAC tag. Hence, the raw cryptographic payloads are
\begin{align}
    B_{\mathsf{Rect}}^{\mathrm{raw}}
    &=
    4+16+32+1+32+32
    =
    117~\text{bytes},
    \\
    B_{\mathsf{Diag}}^{\mathrm{raw}}
    &=
    4+16+32+1+32
    =
    85~\text{bytes}.
\end{align}

For uniformly selected branches, the expected raw payload is 

\begin{equation}
      \mathbb{E}\!\left[B^{\mathrm{raw}}\right]
    =
    \frac{1}{2}B_{\mathsf{Rect}}^{\mathrm{raw}}
    +
    \frac{1}{2}B_{\mathsf{Diag}}^{\mathrm{raw}}
    =
    101~\text{bytes per round}.
\end{equation}

As shown in Fig.~\ref{fig:communication_overhead}, the communication cost increases linearly with the number of authentication rounds. The implemented message format adds approximately \(12\) bytes per round for serialization and framing metadata, including field identifiers, message-length information, and branch-dependent encoding. The resulting measured communication cost is therefore $    B_{\mathrm{impl}}
    \approx
    101+12
    =
    113~\text{bytes per round}.$

At this rate, \(30\) authentication rounds require approximately $    30\times113
    =
    3390~\text{bytes}
    \approx
    3.39~\text{kB}.$
This value excludes lower-layer network headers and BB84 quantum-transmission cost. It represents the application-layer authentication messages generated by the reported implementation.

\subsection{Noise Sensitivity and Decision Rates}

As shown in Fig.~\ref{fig:qber_decision_rates}, clean-channel QBER increased with \(p_e\), while the intercept--resend condition remained higher across the tested range. The disturbance-detection rate remained close to \(100\%\), whereas the false-abort rate increased as the clean-channel QBER approached \(\theta\).
Under the simulated configuration, false aborts remained low up to approximately \(p_e=0.06\) and increased beyond that point. This value defines the empirical operating boundary under the selected simulation model.

\subsection{Quantum-Channel Operating Region}

\begin{figure}[htpb]
    \centering
    \includegraphics[width=1\linewidth]{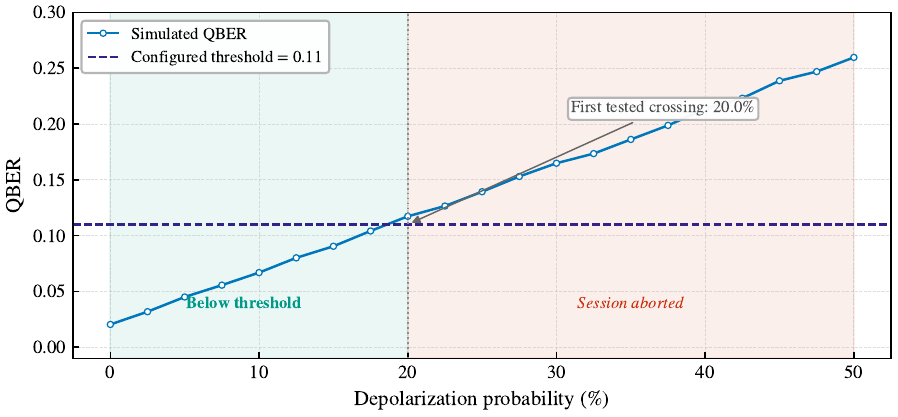}
    \caption{Simulated BB84 operating region. The session is accepted when \(\mathrm{QBER}<0.11\) and aborted otherwise.}
    \label{fig:quantum_operating_region}
\end{figure}

The threshold rule used by QIAP defines the BB84 operating region: the channel is accepted when \(\mathrm{QBER}<0.11\) and the session is aborted when \(\mathrm{QBER}\geq0.11\), as shown in Fig.~\ref{fig:quantum_operating_region}. For this operating-region analysis, the depolarizing error probability is extended beyond the primary evaluation range to identify the threshold-crossing region. Elevated QBER indicates increased channel disturbance but does not determine its source. Such disturbance may result from interception, device errors, calibration drift, transmission loss, or environmental noise. QIAP therefore uses threshold violation as a session-abort criterion without attributing the observed disturbance to a specific cause.
\subsection{Discussion}

The results show that QIAP separates classical identity authentication from quantum-channel validation while requiring both stages to succeed before session acceptance. The classical layer rejected all tested wrong-secret attempts, and the BB84 layer distinguished the clean and full intercept--resend conditions under the selected simulation settings.

The numerical findings reflect the implemented threat model, channel model, qubit count, threshold, and software environment. Extensions to partial interception, detector attacks, source imperfections, finite-key effects, and hardware execution remain open directions.

QIAP introduces higher execution and communication cost than the classical baselines because it uses branch-dependent authentication evidence and a separate BB84 stage. QIAP is designed for systems that require both secret-based identity verification and quantum-channel disturbance monitoring. Deployment also requires protected verifier storage, authenticated BB84 classical communication, record-pool management, and implementation-specific threshold selection.

\section{Conclusion}

This paper presented QIAP, a stateful hybrid framework that combines classical dual-branch authentication with BB84-based channel validation. The classical layer verifies possession of secret-derived authentication material, while the BB84 layer evaluates quantum-channel disturbance through QBER monitoring. A communication session is accepted only when both authentication and channel validation succeed.
In the experiments, QIAP rejected all 1,000 wrong-secret attempts per branch, producing an observed rejection rate of \(100\%\) under the tested implementation. The BB84 simulation produced mean QBER values of approximately \(0.02\) for the clean channel and \(0.25\) for the full intercept--resend condition using \(N_q=1024\) qubits and a threshold of \(\theta=0.11\). The classical authentication stage required approximately \(0.6\)~ms, and the implemented message format required approximately \(113\) bytes per authentication round.
These results show that QIAP can combine secret-based identity verification with quantum-channel disturbance monitoring at low software cost. The findings are established under the stated cryptographic assumptions and evaluated channel models. Elevated QBER serves as the criterion for session abort without attributing the observed disturbance to a specific cause.

Future work will examine hardware QKD integration, verifier-database-compromise resistance, record-pool management, lower communication cost, and stronger privacy-preserving authentication designs with formal security proofs.

\section*{Acknowledgment}
 This work was supported in part by the NYUAD Center for Quantum and Topological Systems (CQTS), funded by Tamkeen under the NYUAD Research Institute grant CG008, and the Center for Cyber Security (CCS), funded by Tamkeen under the NYUAD Research Institute Award G1104.
\bibliographystyle{IEEEtran}


{\scriptsize
\bibliography{refs}}
\end{document}